\providecommand{\keywords}[1]
{
  \noindent \small	
  \textbf{Keywords:} #1
}
\providecommand{\amscode}[1]
{
  \noindent \small	
  \textbf{AMS subject classifications:} #1
}
\newtheorem{theorem}{Theorem}[section]
\newtheorem{definition}[theorem]{Definition}
\newtheorem{lemma}[theorem]{Lemma}
\newtheorem{corollary}[theorem]{Corollary}
\newtheorem{remark}[theorem]{Remark}
\newcommand{\re}{\mathrm{e}}
\newcommand{\ri}{\mathrm{i}}
\newcommand{\rd}{\, \mathrm{d}}
\newcommand{\bsk}{\boldsymbol{k}}
\newcommand{\bsell}{\boldsymbol{\ell}}
\newcommand{\bszero}{\boldsymbol{0}}
\newcommand{\bsx}{\boldsymbol{x}}
\newcommand{\bsy}{\boldsymbol{y}}
\newcommand{\bsz}{\boldsymbol{z}}
\newcommand{\ran}{\mathrm{ran}}
\newcommand{\wor}{\mathrm{wor}}
\newcommand{\EE}{\mathbb{E}}
\newcommand{\II}{\mathbb{I}}
\newcommand{\NN}{\mathbb{N}}
\newcommand{\PP}{\mathbb{P}}
\newcommand{\RR}{\mathbb{R}}
\newcommand{\ZZ}{\mathbb{Z}}
\newcommand{\Acal}{\mathcal{A}}
\DeclareMathOperator{\supp}{supp}
\DeclareMathOperator{\width}{width}
\title{Tractability results for integration in subspaces of the Wiener algebra
\thanks{The work of J.~D.\ is supported by ARC grant DP220101811. 
The work of T.~G.\ is supported by JSPS KAKENHI Grant Number 23K03210.
The work of K.~S.\ is supported by JSPS KAKENHI Grant Number 24K06857.}
}
\author{
Josef Dick\thanks{School of Mathematics and Statistics, The University of New South Wales, Kensington, NSW 2052, Australia ({\tt josef.dick@unsw.edu.au})}, 
Takashi Goda\thanks{Graduate School of Engineering, The University of Tokyo, 7-3-1 Hongo, Bunkyo-ku, Tokyo 113-8656, Japan ({\tt goda@frcer.t.u-tokyo.ac.jp})}, 
Kosuke Suzuki\thanks{Faculty of Science, Yamagata University, 1-4-12 Kojirakawa-machi, Yamagata, 990-8560, Japan ({\tt kosuke-suzuki@sci.kj.yamagata-u.ac.jp})}
}
\date{\today}
\begin{document}

\maketitle

\sloppy

\begin{abstract}
In this paper, we present some new (in-)tractability results related to the integration problem in subspaces of the Wiener algebra over the $d$-dimensional unit cube. We show that intractability holds for multivariate integration in the standard Wiener algebra in the deterministic setting, in contrast to polynomial tractability in an unweighted subspace of the Wiener algebra recently shown by Goda (2023). Moreover, we prove that multivariate integration in the subspace of the Wiener algebra introduced by Goda is strongly polynomially tractable if we switch to the randomized setting, where we obtain a better $\varepsilon$-exponent than the one implied by the standard Monte Carlo method. We also identify subspaces in which multivariate integration in the deterministic setting are (strongly) polynomially tractable and we compare these results with the bound which can be obtained via Hoeffding's inequality.
\end{abstract}
\keywords{tractability, multivariate integration, Wiener algebra}

\amscode{41A55, 42B05, 65D30, 65D32}

\section{Introduction}
\subsection{Integration and tractability}
This paper is concerned with tractability of multivariate  numerical integration problems of functions defined over the $d$-dimensional unit cube. 

For an integrable function $f: [0,1)^d\to \RR$, we denote its integral by
\[ I_d(f)=\int_{[0,1)^d}f(\bsx)\rd \bsx. \]
In the deterministic setting, we approximate $I_d(f)$ using a general algorithm of the form
\begin{align}\label{eq:general_quadrature}
    Q_{d,n}(f) = \varphi_n(f(\bsx_1),\ldots,f(\bsx_n)),
\end{align}
with the nodes $\bsx_1,\ldots,\bsx_n\in [0,1)^d$, where $\varphi_n: \RR^n\to \RR$ is a (linear or nonlinear) mapping.
Here, for $h\geq 2$, each node $\bsx_h$ can be chosen sequentially based on the previously evaluated function values, i.e., $\bsx_h = \psi_h(f(\bsx_1),\ldots,f(\bsx_{h-1}))$ for some mapping $\psi_h$.
If $\varphi_n$ is a linear mapping and the nodes are chosen non-adaptively, the algorithm \eqref{eq:general_quadrature} reduces to a non-adaptive linear algorithm:
\begin{align}\label{eq:quadrature}
    Q_{d,n}(f)=\sum_{h=0}^{n-1}c_h f(\bsx_h)
\end{align}
with a set of fixed $n$ points $P_{d,n}=\{\bsx_0,\ldots,\bsx_{n-1}\}\subset [0,1)^d$ and associated real coefficients $\{c_0,\ldots,c_{n-1}\}.$ A quasi-Monte Carlo (QMC) rule is a special case of \eqref{eq:quadrature}, in which all the coefficients $c_h$ are equal to $1/n$.

The worst-case error of an algorithm $Q_{d,n}$ in a Banach space $F$ of integrable functions $f: [0,1)^d\to \RR$ with norm $\|\cdot\|$ is defined as
\[ e^{\wor}(F,Q_{d,n}):=\sup_{f\in F, \|f\|\leq 1}\left| I_d(f)-Q_{d,n}(f)\right|. \] 
Information-based complexity \cite{NW08, NW10, TWW88} is concerned with how many function evaluations are necessary to make the worst-case error less than or equal to a tolerance $\varepsilon\in (0,1)$. 
This quantity is called the \emph{information complexity} and is defined by
\[ n^{\wor}(\varepsilon, d, F):= \min\{ n\in \NN\, \mid \, \exists Q_{d,n},\, e^{\wor}(F,Q_{d,n})\leq \varepsilon \}.\]
Here we note that, since the unit ball of $F$ is always symmetric and convex, it follows from, for instance, \cite[Theorem~4.7]{NW08}, that a non-adaptive linear algorithm \eqref{eq:quadrature} minimizes the worst-case error among general algorithms of the form \eqref{eq:general_quadrature}.

In the randomized setting, we approximate $I_d(f)$ by a randomized algorithm $A_{d}$ that is defined as a pair of a probability space $(\Omega,\Sigma,\mu)$ and a collection of deterministic algorithms $(Q_{d,\omega})_{\omega\in \Omega}$. That is, for each fixed $\omega\in \Omega$, the corresponding algorithm $Q_{d,\omega}$ is a deterministic algorithm of the form \eqref{eq:general_quadrature}. We assume that $(Q_{d,\omega}(f))_{\omega\in \Omega}$ is measurable for each $f\in F$. In this setting, one of the quality measures for $A_d$ is the worst-case randomized  error:
\[ e^{\ran}(F,A_d):=\sup_{f\in F, \|f\|\leq 1}\EE_{\omega}\left[ \left| I_d(f)-Q_{d,\omega}(f)\right| \right].\]
Here, the number of points used by $Q_{d, \omega}(f)$ can change depending on $\omega\in \Omega$ and $f\in F$, and we denote it by $n(f,\omega)$. The cardinality of a randomized algorithm $A_d$ is defined by
\[ \#(A_d) := \sup_{f\in F}\EE_{\omega}\left[ n(f,\omega) \right] ,\]
see \cite[Section~4.3.3]{NW08}.
Similarly to the deterministic setting as above, the information complexity in the randomized setting is defined by
\[ n^{\ran}(\varepsilon, d, F):= \min\{ n\in \NN\, \mid \, \exists A_d,\,  e^{\ran}(F,A_d)\leq \varepsilon\: \: \text{and}\: \: \#(A_d)\leq n\}.\]

Tractability is a key terminology in the field of information-based complexity. Although there are many other or refined notations \cite{NW08,NW10}, we introduce the major three notions of tractability that are used in this paper. 

\begin{definition}
    Multivariate integration in the deterministic setting for a Banach space $F$ is 
    \begin{itemize}
        \item \emph{intractable} if $n(\varepsilon,d,F)$ grows at least exponentially fast in $\varepsilon^{-1}$ or $d$ for $\varepsilon\to 0$ and $d\to \infty$, i.e. $\lim_{\varepsilon^{-1} + d \to \infty} \frac{\ln n(\varepsilon,d,F)}{\varepsilon^{-1}+d} > 0$;
        \item \emph{polynomially tractable} if there exist $c,\alpha,\beta>0$ such that 
        \[ n(\varepsilon,d,F)\leq c \varepsilon^{-\alpha}d^\beta \]
        holds for any $\varepsilon\in (0,1)$ and $d\in \NN$, where $\alpha$ and $\beta$ are referred to as the $\varepsilon$-exponent and $d$-exponent, respectively;
        \item and \emph{strongly polynomially tractable} if there exist $c,\alpha>0$ such that 
        \[ n(\varepsilon,d,F)\leq c \varepsilon^{-\alpha} \]
        holds for any $\varepsilon\in (0,1)$ and $d\in \NN$, where $\alpha$ is referred to as the $\varepsilon$-exponent.
    \end{itemize}
    These notions are also applied to the randomized setting by replacing $n(\varepsilon,d,F)$ with $n^{\ran}(\varepsilon, d, F)$.
\end{definition}

\subsection{Subspaces of Wiener algebra and known results}

In this paper, we focus on the information complexity in (subspaces of) the Wiener algebra:
\begin{align}\label{eq:subspace_wiener}
    F_{d,r}:=\left\{ f\in C([0,1)^d)\: \middle| \: \|f\|:=\sum_{\bsk\in \ZZ^{d}}|\hat{f}(\bsk)|\, r(\bsk)<\infty \right\},
\end{align}
where $\hat{f}(\bsk)$ denotes the $\bsk$-th Fourier coefficient of $f$, i.e.,
\[ \hat{f}(\bsk) = \int_{[0,1)^d}f(\bsx)\, \re^{-2\pi \ri \bsk\cdot \bsx}\rd \bsx,\]
and $r(\bsk)\geq 1$ represents the ``Fourier weight'' for the corresponding frequency. Notice that $F_{d,r}$ coincides with the standard Wiener algebra if $r(\bsk)=1$ for all $\bsk\in \ZZ^d$. 

In \cite{D14}, one of the current authors proved polynomial tractability (in the deterministic setting) for the intersection of the standard Wiener algebra and a class of H\"{o}lder continuous functions. 
Later, in \cite{G23}, another of the current authors proved a similar result on polynomial tractability for the subspace of the Wiener algebra given by
\[ r(\bsk) \geq r_1(\bsk) := \max\left(1,\log \min_{j\in \supp(\bsk)}|k_j|\right),\]
where $\supp(\bsk):=\{j\in \{1,\ldots,d\}\mid k_j\neq 0\}$. More precisely, it has been shown that an upper bound $n^{\wor}(\varepsilon, d, F_{d,r_1})\leq c \varepsilon^{-3}d^3$ holds for a constant $c>0$. 
It is important to emphasize that these function spaces $F$ are \emph{unweighted} in the sense that all input variables contribute equally to the norm. Therefore, for any permutation matrix $\pi$ and $f\in F$, it holds that $f\circ \pi \in F$ and $\|f\circ \pi\|=\|f\|$. 

We refer to \cite{CH24, K23, KV23} for more recent progress on this line of research. 
For instance, in contrast to the above positive results, intractability in the deterministic setting has been shown in \cite{KV23} for the function space
\[ \left\{ f\in C([0,1)^d)\: \middle| \: \|f\|^2:=\sum_{\bsk\in \ZZ^{d}}|\hat{f}(\bsk)|^2\, \max\left(1,\log \max_{j\in \supp(\bsk)}|k_j|\right)<\infty \right\}.\]
We also note that the space \eqref{eq:subspace_wiener}, with special forms of $r$, has been studied recently, for instance, in \cite{JUV23,KPUU23} in the context of $L_p$-approximation.

In passing, both results in \cite{D14} and \cite{G23} are proven using constructive arguments, that is, explicit QMC rules using Korobov's $p$-set or its multiset union over different cardinalities are constructed to attain the desired worst-case error bounds.
The weighted version of the star discrepancy for Korobov's $p$-set was studied in \cite{DP15}, where the weighted star discrepancy is shown to be bounded polynomially in $d$ or even independent of $d$ under certain summability conditions on the weight parameters. 

\subsection{Summary of our results}

Regarding the multivariate integration problem, the exact tractability characterization for the space introduced in \cite{G23} remains open. Although it was proven to be at least polynomially tractable, it could be strongly polynomially tractable. 
While we do not answer this problem in this paper, we present the following related results.

\begin{enumerate}
\item (Theorem~\ref{thm:main1}) If $r(\bsk)= 1$ for all $\bsk\in \ZZ^d$,
the information complexity in the deterministic setting is infinite for $\varepsilon\in (0,1/2)$,
concluding that the multivariate integration problem for the standard Wiener algebra is intractable. 

\item (Theorem~\ref{thm:main2}) If $r(\bsk) \geq r_2(\bsk):=\max\left(\width(\supp(\bsk)),\log \min_{j\in \supp(\bsk)}|k_j|\right)$, where we define $\width(\supp(\bsk)):=\max_{j\in \supp(\bsk)} j-\min_{j\in \supp(\bsk)}j+1$, we show by the same QMC rule as in \cite{G23} that the multivariate integration problem in the deterministic setting for $F_{d, r}$ is strongly polynomially tractable. We point out the slight difference between $r_1(\bsk)$ and $r_2(\bsk)$: in $r_1(\bsk)$, the first argument of the maximum is always $1$, while in $r_2(\bsk)$, the first argument is $\width(\supp(\bsk))$.

\item (Theorem~\ref{thm:main4}) Many classical results, see, e.g., \cite{DKP22}, use Fourier weights of product form. Here we show, again by the same QMC rule as in \cite{G23}, that for the product Fourier weights $r(\bsk) \geq r_3(\bsk) = \prod_{j=1}^d \max(1, \log|k_j|)$, the multivariate integration problem in the deterministic setting for $F_{d, r}$ is at least polynomially tractable.

\item (Theorem~\ref{thm:main5}) For comparison, we use a proof technique from \cite{HNWW01} to show that the multivariate integration problem in the deterministic setting for $F_{d,r}$ with $r(\bsk) \geq r_4(\bsk) = \prod_{j=1}^d \max(1, |k_j|)$, is polynomially tractable. 

\item (Theorem~\ref{thm:main3}) For the case $r(\bsk) \geq r_1(\bsk)$, we show by an explicit randomized algorithm that the multivariate integration problem in the randomized setting for $F_{d,r}$ is strongly polynomially tractable. Here we obtain a better $\varepsilon$-exponent than what is expected from the standard Monte Carlo method.
\end{enumerate}

Regarding the second result, although the space $F_{d,r_2}$ is weighted in the sense that all input variable do not contribute equally to the norm, it remains invariant under the reversion of the variables, i.e., if $f\in F_{d,r_2}$, then we have $g\in F_{d,r_2}$ and $\|f\|=\|g\|$ where $g(x_1,\ldots,x_d)=f(x_d,\ldots,x_1)$. This contrasts many existing results on strong polynomial tractability for multivariate integration in the deterministic setting, where ``coordinate weights'' are introduced to model the relative importance of each group of variables, and variables are typically assumed ordered in decreasing order of importance, see \cite{DGPW17,DP15,NW08,NW10,SW98} among many others. In fact, it seems not possible to characterize the space $F_{d,r_2}$ in such a way. 

Regarding the third and fourth results, the space $F_{d, r_4}$ is embedded in the space $F_{d, r_3}$, and hence Theorem~\ref{thm:main4} (proven by an explicit algorithm) provides a de-randomized version of Theorem~\ref{thm:main5} (proven as an existence result). However, the bound in Theorem~\ref{thm:main5} is stronger and so both results have some advantages.

\section{Lower bound}\label{sec:intractable}
As announced, we prove intractability for the standard Wiener algebra.
\begin{theorem}\label{thm:main1}
Consider the standard Wiener algebra, i.e., the space \eqref{eq:subspace_wiener} with $r(\bsk)= 1$ for all $\bsk\in \ZZ^d$. The minimal worst-case error for general algorithms of the form \eqref{eq:general_quadrature} equals $1/2$ for any $n,d\geq 1$, which implies that the information complexity in the deterministic setting, $n^{\wor}(\varepsilon, d, F_{d,r})$, is infinite for any $\varepsilon\in (0,1/2)$.
\end{theorem}

\begin{proof} 
Recall that, for any deterministic algorithm of the form \eqref{eq:general_quadrature}, there exists a non-adaptive linear algorithm of the form \eqref{eq:quadrature} with no larger worst-case error; see \cite[Theorem 4.7]{NW08}. Thus it suffices to
prove a lower bound on the worst-case error that holds for any linear quadrature rule. Since our argument below is based on construction of a univariate fooling function, let us focus on the case with $d=1$.

Recalling that our quadrature rule is now restricted to the form \eqref{eq:quadrature}, if $\sum_{h=0}^{n-1}|c_h|\leq 1/2$ holds, for a constant function $g\equiv 1$, we have $I_1(g)=1, \|g\|=1,$ and also
\[ Q_{1,n}(g)=\sum_{h=0}^{n-1}c_h\leq \sum_{h=0}^{n-1}|c_h|\leq \frac{1}{2}.\]
This means that the worst-case error of any linear quadrature rule satisfying $\sum_{h=0}^{n-1}|c_h|\leq 1/2$ is bounded below by
\begin{align*}
    e^{\wor}(F_{1,r},Q_{1,n})\geq \left| I_1(g)-Q_{1,n}(g)\right|\geq \frac{1}{2},
\end{align*}
independently of $n$. Otherwise, if $\sum_{h=0}^{n-1}|c_h|> 1/2$ holds, it follows from the simultaneous variant of Dirichlet's approximation theorem that, for any real numbers $x_0,x_1,\ldots,x_{n-1}\in [0,1)$, $\rho\geq 2$, and
\[ M=\left\lceil \frac{2\pi}{\arccos\left(1-(\rho\sum_{h=0}^{n-1}|c_h|)^{-1}\right)}\right\rceil^n,\] 
there exist integers $p_0,p_1,\ldots,p_{n-1}$ and $q\in \{1,\ldots,M\}$ such that
\[ \left| q x_h - p_h \right| <\frac{1}{M^{1/n}}=\frac{\arccos(1-(\rho\sum_{h=0}^{n-1}|c_h|)^{-1})}{2\pi}\quad \text{for all $h = 0, 1, \ldots, n-1$}.\]

Let $\epsilon_h := q x_h - p_h$, for $h = 0, 1, \ldots, n-1$. With this bound, for the function $g^*(x)=1-\cos(2\pi qx)$, we have $I_1(g^*)=1, \|g^*\|=2$ and
\begin{align*}
    |Q_{1,n}(g^*)| & = \left|\sum_{h=0}^{n-1}c_h (1-\cos(2\pi qx_h))\right| = \left|\sum_{h=0}^{n-1}c_h (1-\cos(2\pi (p_h+\epsilon_h)))\right| \\
    & = \left|\sum_{h=0}^{n-1}c_h (1-\cos(2\pi \epsilon_h))\right| \leq \sum_{h=0}^{n-1}|c_h|\, |1-\cos(2\pi \epsilon_h)|\\
    & \leq \sum_{h=0}^{n-1}|c_h|\, \left|1-\cos\left(\arccos\left(1-\left(\rho\sum_{h'=0}^{n-1}|c_{h'}|\right)^{-1}\right)\right)\right|\\
    & = \sum_{h=0}^{n-1}|c_h|\left(\rho\sum_{h'=0}^{n-1}|c_{h'}|\right)^{-1}=\frac{1}{\rho}.
\end{align*}
This means that the worst-case error of any linear quadrature algorithm satisfying $\sum_{h=0}^{n-1}|c_h|> 1/2$ is bounded below by
\begin{align*}
    e^{\wor}(F_{1,r},Q_{1,n})\geq \frac{\left| I_1(g^*)-Q_{1,n}(g^*)\right|}{\|g^*\|}\geq \frac{1-1/\rho}{2}.
\end{align*}
It is important to note that this lower bound holds for arbitrarily large $\rho$ and also applies to the case with $d\geq 2$ since our univariate function $g^*$ belongs to $F_{d,r}$ for any $d\geq 2$.

We now show a matching upper bound. For any $n,d\geq 1$, let us consider an algorithm
\[ Q_{d,n}(f)=Q_{d,n}^*(f):=\frac{f(\bszero)}{2},\]
which only uses a single function evaluation at the origin $\bsx=\bszero$. For any $f \in F_{d,r}$ we have
\begin{align*}
\left|I(f) - Q_{d,n}^*(f)\right|
= \left|\hat{f}(\bszero) - \frac{1}{2}\sum_{\bsk \in \ZZ^d}\hat{f}(\bsk) \right|
&= \left|\frac{1}{2}\hat{f}(\bszero) - \frac{1}{2}\sum_{\bsk \in \ZZ^d \setminus \{\bszero\}}\hat{f}(\bsk) \right|\\
&\le \frac{1}{2}|\hat{f}(\bszero)| + \frac{1}{2}\sum_{\bsk \in \ZZ^d \setminus \{\bszero\}}|\hat{f}(\bsk)|
= \frac{1}{2}\|f\|.
\end{align*}
This proves $e^{\wor}(F_{d,r},Q_{d,n}^*) \leq 1/2$.
Recalling the lower bound shown above, regarding the minimal worst-case error for general algorithms, we have obtained 
\[ \inf_{Q_{d,n}}e^{\wor}(F_{d,r},Q_{d,n})=\frac{1}{2}.\]
for any $n$ and $d$,
which implies that $n^{\wor}(\varepsilon, d, F_{d,r})$ is infinite for any $\varepsilon\in (0,1/2).$ This completes the proof.
\end{proof}


\section{(Strong) Tractability in the deterministic setting}\label{sec:str_tractable_det}

\subsection{Numerical integration in $F_{d,r_2}$}

This section is devoted to proving the following upper bound.
\begin{theorem}\label{thm:main2}
Consider the space \eqref{eq:subspace_wiener} with $r(\bsk) \geq r_2(\bsk) = \max(\width (\supp(\bsk)), \log \min_{j \in \supp(\bsk)} |k_j|)$. There exists a constant $C>0$ such that, for any $d\in \NN$ and $\varepsilon \in (0,1)$, we have
    \[ n^{\wor}(\varepsilon, d, F_{d, r})\leq C \varepsilon^{-3}/(\log \varepsilon^{-1}). \]
\end{theorem}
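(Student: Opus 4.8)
The plan is to mimic the QMC construction from Goda (2023), which established the bound $n^{\wor}(\varepsilon,d,F_{d,r_1})\leq c\varepsilon^{-3}d^3$ using Korobov's $p$-set, but to exploit the extra structure of $r_2$ to remove the dependence on $d$ entirely. The key observation is that $r_2(\bsk)=\max(\width(\supp(\bsk)),\log\min_{j\in\supp(\bsk)}|k_j|)$ differs from $r_1$ only in its first argument: where $r_1$ uses the constant $1$, $r_2$ uses $\width(\supp(\bsk))$, the length of the smallest block of consecutive indices containing $\supp(\bsk)$. Since $r_2\geq r_1$ pointwise, the unit ball of $F_{d,r_2}$ is contained in that of $F_{d,r_1}$, so any algorithm good for $F_{d,r_1}$ is at least as good for $F_{d,r_2}$; the point is that the $\width$ weight should let us control the sum over frequencies by a $d$-independent quantity.

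First I would recall the integration error formula for a QMC rule using a point set $P_{d,n}$: for $f\in F_{d,r_2}$ we have $I_d(f)-Q_{d,n}(f)=-\sum_{\bsk\neq\bszero}\hat f(\bsk)\,S_n(\bsk)$, where $S_n(\bsk)=\frac1n\sum_{h=0}^{n-1}\re^{2\pi\ri\bsk\cdot\bsx_h}$ is the exponential sum. Bounding this by $\sup_{\bsk\neq\bszero}|S_n(\bsk)|/r_2(\bsk)$ times $\|f\|$, the worst-case error is controlled by $\sup_{\bsk\neq\bszero}|S_n(\bsk)|/r_2(\bsk)$. For Korobov's $p$-set (or its multiset union over several primes, as in \cite{G23}), one has character-sum cancellation estimates: for a nonzero frequency $\bsk$, $|S_n(\bsk)|$ is small unless $\bsk$ lies in a structured ``bad'' set, and one partitions $\ZZ^d\setminus\{\bszero\}$ according to the size of $\min_{j\in\supp(\bsk)}|k_j|$ and the geometry of $\supp(\bsk)$. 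The estimate on $|S_n(\bsk)|$ should be taken verbatim from the machinery of \cite{G23}; the new content is entirely in the summation/normalization step.

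The crucial step, and the main obstacle, is to show that the resulting frequency sum is bounded independently of $d$. In the $r_1$ analysis, the bound on $\sup_{\bsk}|S_n(\bsk)|/r_1(\bsk)$ picks up a factor polynomial in $d$ because there are $\binom{d}{s}$-many ways to place a support of size $s$ among $d$ coordinates, and the weight $r_1$ does not penalize the spread of the support. With $r_2$, the $\width$ term penalizes exactly this: a frequency whose support spans a block of length $w$ carries weight at least $w$, and the number of length-$w$ blocks of consecutive indices inside $\{1,\dots,d\}$ is $d-w+1\leq d$, but within a fixed block the support can only occupy $w$ coordinates, so the combinatorial explosion is replaced by a sum over $w$ of the form $\sum_{w\geq1}(\text{number of blocks of width }w)\times(\text{per-block contribution})/w$. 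The plan is to show that the per-block contribution, combined with the $1/w$ normalization from $\width$, makes the geometric/translation factor telescope or sum to an absolute constant rather than growing with $d$; concretely, I expect that after translation-invariance of the $\width$ weight one effectively sums over the \emph{leftmost} index and the internal pattern, and the $1/w$ factor cancels the $d$-many translates so that the total is $O(1)$ in $d$.

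Finally, I would assemble the pieces exactly as in \cite{G23}: choosing the cardinality $n$ (as a multiset union of Korobov $p$-sets over a dyadic or geometric range of primes) so that $\sup_{\bsk\neq\bszero}|S_n(\bsk)|/r_2(\bsk)\leq\varepsilon$, and reading off the relationship between $n$ and $\varepsilon$. Since the $r_1$ bound was $\varepsilon^{-3}d^3$, removing the $d^3$ via the $\width$ weight should leave $\varepsilon^{-3}$ up to logarithmic factors; the stated improvement to $C\varepsilon^{-3}/\log\varepsilon^{-1}$ presumably comes from a more careful accounting of the prime-number-theorem factors in the multiset construction (the density of primes up to a threshold contributes a $1/\log$ saving), so I would track the $\log$ factors through the choice of primes rather than discarding them. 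The main risk is that the $d$-independence of the frequency sum requires the $\width$ penalty to dominate uniformly; verifying this uniformity across all support geometries is where the real work lies.
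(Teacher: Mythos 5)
There is a genuine gap: you have located the $d$-removal in the wrong place. You propose to take the exponential-sum estimate ``verbatim from the machinery of \cite{G23}'' and to do all the new work in the summation/normalization step via a combinatorial count of supports grouped by width. But in this framework the error bound is purely per-frequency: $|I_d(f)-Q_{d,n}(f)|\le\sum_{\bsk\neq\bszero}|\hat f(\bsk)|\,|S_n(\bsk)|\le\|f\|\sup_{\bsk\neq\bszero}|S_n(\bsk)|/r_2(\bsk)$, and no enumeration of supports ever enters --- the norm $\|f\|=\sum_{\bsk}|\hat f(\bsk)|r_2(\bsk)$ absorbs everything frequency by frequency. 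Moreover, the $d^3$ in \cite{G23} does not come from ``$\binom{d}{s}$-many ways to place a support''; it comes from the per-frequency exponential-sum bound itself, which for Korobov's $p$-set is of order $d/p$ (the degree of the polynomial controlling the relevant congruence). If you keep that bound verbatim, then for a frequency with $\supp(\bsk)=\{1\}$ you get $|S_n(\bsk)|/r_2(\bsk)\approx 4d/m$, and the $d$ survives. Your proposed rescue --- that the $d-w+1$ translates of a width-$w$ block are cancelled by the $1/w$ from the weight --- also fails on its face at $w=1$, where $(d-w+1)/w=d$.

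The actual key step, which your plan omits, is a \emph{refined per-frequency exponential-sum bound} (Lemma~\ref{lem:exponential_sum} in the paper): for a prime $p\ge d$ with $p\nmid\bsk$, the Korobov $p$-set sum is at most $\width(\supp(\bsk))/p$, not $d/p$. This follows because the sum reduces to counting solutions of $\sum_{j\in\supp(\bsk)}k_j\,j\,h^{j-1}\equiv 0\pmod p$, and after factoring out $h^{j_{\min}-1}$ the remaining polynomial has degree $j_{\max}-j_{\min}$, hence at most $\width(\supp(\bsk))$ roots mod $p$. The primes with $p\mid\bsk$ contribute only the trivial bound, but at most $2\log|k_j|/\log m$ primes of $\PP_m$ can divide any fixed nonzero $k_j$, which yields the $\min_{j\in\supp(\bsk)}\log|k_j|$ term. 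Combining these gives $|S_n(\bsk)|\le\frac{12}{c_{\PP}m}\,r_2(\bsk)$ for every $\bsk\neq\bszero$, with no $d$ anywhere, and the rest of your assembly (choosing $m\sim\varepsilon^{-1}$, $n=\sum_{p\in\PP_m}p^2\sim m^3/\log m$, whence the $\varepsilon^{-3}/\log\varepsilon^{-1}$) is correct and matches the paper. So your outer scaffolding is right, but the central lemma that makes the theorem true is missing, and the substitute argument you sketch for it would not close.
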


Our proof is constructive in the sense that we provide an explicit QMC rule that attains the desired worst-case error bound. The QMC rule considered here is exactly the same as the one discussed in \cite{G23}. For an integer $m\geq 2$, let
\[ \PP_m := \{\lceil m/2\rceil<p\leq m\, \mid\, \text{$p$ is prime} \}.\]
It is known that there exist constants $c_{\PP}$ and $C_{\PP}$ with $0<c_{\PP}<\min(1,C_{\PP})$ such that
\begin{align}\label{eq:prime_num}
c_{\PP}\frac{m}{\log m}\leq |\PP_m|\leq C_{\PP}\frac{m}{\log m}, \end{align}
for all $m\geq 2$, see \cite[Corollaries~1--3]{RS62}. Now, given an integer $m\geq 2$, we define two different point sets as multiset unions:
\[ P_{d,m}^1=\bigcup_{p\in \PP_m}S_{d,p}\quad \text{and}\quad P_{d,m}^2=\bigcup_{p\in \PP_m}T_{d,p},\]
where $S_{d,p}=\{\bsx_h^{(p)}\mid 0\leq h<p^2\}$ and $T_{d,p}=\{\bsy_{h,\ell}^{(p)}\mid 0\leq h,\ell<p\}$ are sets with $p^2$ points known as \emph{Korobov's $p$-sets} \cite{D14,DP15,HW81,K63}. The points are defined as follows:
\[ \bsx_h^{(p)}=\left( \left\{ \frac{h}{p^2}\right\}, \left\{ \frac{h^2}{p^2}\right\},\ldots, \left\{ \frac{h^d}{p^2}\right\}\right),\]
and
\[  \bsy_{h,\ell}^{(p)}=\left( \left\{ \frac{h\ell}{p}\right\}, \left\{ \frac{h\ell^2}{p}\right\},\ldots, \left\{ \frac{h\ell^d}{p}\right\}\right), \]
respectively, where we write $\{x\}=x-\lfloor x\rfloor$ to denote the fractional part of a non-negative real number $x$. It is important to note that taking a multiset unions of Korobov's $p$-sets with different primes $p$ is crucial in our error analysis. Obviously we have
\[ |P_{d,m}^1|=|P_{d,m}^2|=\sum_{p\in \PP_m}p^2 \ge \left(\frac{m}{2}\right)^2 |\PP_m|.\]

The following result on the exponential sums refines the known results from \cite[Lemmas~4.5 \& 4.6]{HW81} as well as \cite[Lemmas~4.4 \& 4.5]{DP15}.
\begin{lemma}\label{lem:exponential_sum}
    Let $d\in \NN$ and $p$ be a prime.  
    For any $\bsk\in \ZZ^d\setminus \{\bszero\}$ such that there exists at least one index $j^*\in \{1,\ldots,d\}$ where $k_{j^*}$ is not divisible by $p$, i.e., $p\nmid \bsk$, the following bounds hold:
    \[ \left|\frac{1}{p^2}\sum_{h=0}^{p^2-1}\exp\left( 2\pi \ri \bsk\cdot \bsx_h^{(p)}\right)\right| \leq \frac{\width(\supp(\bsk))}{p},\]
    and
    \[ \left|\frac{1}{p^2}\sum_{h,\ell=0}^{p-1}\exp\left( 2\pi \ri \bsk\cdot \bsy_{h,\ell}^{(p)}\right)\right| \leq \frac{\width(\supp(\bsk))}{p}.\]
\end{lemma}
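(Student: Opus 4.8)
The plan is to reduce each of the two exponential sums to a count of the number of solutions of a polynomial congruence modulo $p$, and then to bound that count by $\width(\supp(\bsk))$ rather than by the full degree $d$. The refinement over \cite{HW81,DP15} comes entirely from this last step: writing $a=\min\supp(\bsk)$ and $b=\max\supp(\bsk)$, so that $\width(\supp(\bsk))=b-a+1$, I would factor the lowest-order monomial out of the relevant polynomial, reducing its effective degree from $b$ (or $d$) down to $b-a=\width(\supp(\bsk))-1$, and then invoke the fact that a nonzero polynomial over the field $\mathbb{F}_p$ of degree $n$ has at most $n$ roots.

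For the second sum, over $T_{d,p}$, I would exploit the product structure. Since $\bsk\cdot\bsx_{h,\ell}^{(p)}\equiv \tfrac{h}{p}\sum_{j=1}^d k_j\ell^j \pmod 1$, the variable $h$ separates and orthogonality of additive characters gives $\sum_{h=0}^{p-1}\exp(2\pi\ri h A/p)=p$ when $p\mid A$ and $0$ otherwise, where $A=\sum_{j=1}^d k_j\ell^j$. Hence the double sum equals $p$ times the number of $\ell\in\{0,\dots,p-1\}$ with $p\mid \sum_j k_j\ell^j$. To count these I would set $f(\ell)=\sum_{j=a}^b k_j\ell^j=\ell^a g(\ell)$ with $g(\ell)=\sum_{j=a}^b k_j\ell^{j-a}$ of degree at most $b-a$ and $g(0)=k_a$. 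The hypothesis $p\nmid\bsk$ guarantees that the coefficient $k_{j^*}$ of $\ell^{j^*-a}$ is a unit, so $g$ is not the zero polynomial modulo $p$ and has at most $b-a$ roots; together with the single root $\ell=0$ this gives at most $b-a+1=\width(\supp(\bsk))$ solutions, and dividing by $p^2$ yields the claimed bound.

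For the first sum, over $S_{d,p}$, the modulus is the prime power $p^2$, so I would use a Hensel-type lift. Writing $h=u+pv$ with $0\le u,v<p$ and Taylor-expanding the integer polynomial $f(h)=\sum_{j=1}^d k_jh^j$, all terms of order $\ge 2$ in $pv$ vanish modulo $p^2$, leaving $f(u+pv)\equiv f(u)+pv\,f'(u)\pmod{p^2}$. Summing over $v$ collapses the expression to $p\sum_{u:\,p\mid f'(u)}\exp(2\pi\ri f(u)/p^2)$, so that the modulus is at most $p$ times the number of roots of $f'$ modulo $p$. I would then bound that root count exactly as above: $f'(h)=\sum_{j=a}^b j k_j h^{j-1}=h^{a-1}\tilde g(h)$ with $\deg\tilde g\le b-a$, contributing the root $h=0$ (when $a\ge 2$) together with at most $b-a$ further roots, for a total of at most $\width(\supp(\bsk))$.

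The main obstacle is ensuring that $\tilde g$ (equivalently $f'$) does not vanish identically modulo $p$, since only then does the root bound apply. This requires the coefficient $j^* k_{j^*}$ to be a unit modulo $p$, and hence $p\nmid j^*$; this is exactly where the hypothesis relating $p$ and $d$ enters, as $j^*\le d\le p$ forces $p\nmid j^*$ away from the boundary case $j^*=p=d$, which I would treat separately (or note that it is excluded in the regime where the lemma is applied). A secondary point requiring care is the bookkeeping of whether $\ell=0$ (resp.\ $h=0$) is already among the counted roots, so that the final tally is at most $\width(\supp(\bsk))$ rather than $\width(\supp(\bsk))+1$.
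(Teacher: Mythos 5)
Your proposal follows essentially the same route as the paper's proof: for the $S_{d,p}$ sum, the split $h=h_0+h_1p$ plus character orthogonality in $h_1$ reduces everything to counting zeros of $\sum_j jk_jh_0^{j-1}$ modulo $p$; for the $T_{d,p}$ sum, orthogonality in $h$ reduces it to counting zeros of $\sum_j k_j\ell^j$; and in both cases factoring out the lowest-order monomial cuts the relevant degree to $\width(\supp(\bsk))-1$, with the root at $0$ accounting for the $+1$. One substantive remark: the boundary case you flag, $j^*=p=d$ (where $p\mid j^*k_{j^*}$ and the derivative polynomial may vanish identically mod $p$), is silently passed over in the paper's proof and is a genuine exception --- e.g.\ for $p=d=2$ and $\bsk=(0,1)$ the first normalized sum equals $\sqrt{2}/2>1/2=\width(\supp(\bsk))/p$ --- so the first bound really requires $p>d$ (the second bound, as you note, does not use this hypothesis at all); your instinct to treat that case separately is correct rather than overly cautious.
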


\begin{proof}
    Let us consider the first bound. As we have $\{0,\ldots,p^2-1\}=\{h_0+h_1p \mid 0\leq h_0,h_1<p\}$ and, for each pair of $h_0,h_1\in \{0,\ldots,p-1\}$, it holds that
    \begin{align*}
        \exp\left( 2\pi \ri \bsk\cdot \bsx_{h_0+h_1p}^{(p)}\right) & = \exp\left( \frac{2\pi \ri}{p^2}\sum_{j\in \supp(\bsk)}k_j (h_0+h_1p)^j\right) \\
        & = \exp\left( \frac{2\pi \ri}{p^2}\sum_{j\in \supp(\bsk)}k_j \sum_{a=0}^{j}\binom{j}{a}h_0^a(h_1p)^{j-a}\right) \\
        & = \exp\left( \frac{2\pi \ri}{p^2}\sum_{j\in \supp(\bsk)}k_j (h_0^j+j h_0^{j-1}h_1p)\right),
    \end{align*}
    and hence
    \begin{align*}
        \left|\frac{1}{p^2}\sum_{h=0}^{p^2-1}\exp\left( 2\pi \ri \bsk\cdot \bsx_h^{(p)}\right)\right| & = \left|\frac{1}{p^2}\sum_{h_0,h_1=0}^{p-1}\exp\left( \frac{2\pi \ri}{p^2}\sum_{j\in \supp(\bsk)}k_j (h_0^j+j h_0^{j-1}h_1p)\right)\right| \\
        & = \left|\frac{1}{p}\sum_{h_0=0}^{p-1}\exp\left( \frac{2\pi \ri}{p^2}\sum_{j\in \supp(\bsk)}k_j h_0^j\right)\frac{1}{p}\sum_{h_1=0}^{p-1}\exp\left(\frac{2\pi \ri h_1}{p}\sum_{j\in \supp(\bsk)}k_j j h_0^{j-1}\right)\right| \\
        & \leq \frac{1}{p}\sum_{h_0=0}^{p-1}\left|\frac{1}{p}\sum_{h_1=0}^{p-1}\exp\left(\frac{2\pi \ri h_1}{p}\sum_{j\in \supp(\bsk)}k_j j h_0^{j-1}\right)\right|\\
        & = \frac{1}{p}\sum_{\substack{h_0=0\\ \sum_{j\in \supp(\bsk)}k_j j h_0^{j-1}\equiv 0 \pmod p}}^{p-1}1,
    \end{align*}
    where the last equality follows from the well-known character property for the trigonometric functions, see, e.g., \cite[Lemma~4.3]{DHP15}. Here, by denoting $j_{\min}=\min_{j\in \supp(\bsk)}j$ and $j_{\max}=\max_{j\in \supp(\bsk)}j$, we have
    \[ \sum_{j\in \supp(\bsk)}k_j j h_0^{j-1}=\sum_{\substack{j=j_{\min}\\ j\in \supp(\bsk)}}^{j_{\max}}k_j j h_0^{j-1}= h_0^{j_{\min}-1}\sum_{\substack{j=j_{\min}\\ j\in \supp(\bsk)}}^{j_{\max}}k_j j h_0^{j-j_{\min}}.\]
    As the last sum over $j$ is a polynomial in $h_0$ with degree $j_{\max}-j_{\min}$, the number of solutions of the congruence $\sum_{j\in \supp(\bsk)}k_j j h_0^{j-1}\equiv 0 \pmod p$ is at most $j_{\max}-j_{\min}+1=\width(\supp(\bsk))$. Thus the result follows. Since the second bound can be proven in the same manner, we omit the details.
\end{proof}

Note that, if $k_j$ is divisible by $p$ for all $j$, i.e., $p\mid \bsk$, then we only have a trivial bound on the exponential sum, which is 1. Using this refined result, we obtain the following bounds on the exponential sums for our point sets $P_{d,m}^1$ and $P_{d,m}^2$.
\begin{corollary}\label{cor:exponential_sum}
    Let $d\in \NN$ and $m\geq 2$. 
    For any $\bsk\in \ZZ^d\setminus \{\bszero\}$, it holds that
    \[ \left|\frac{1}{|P_{d,m}^1|}\sum_{p\in \PP_m}\sum_{h=0}^{p^2-1}\exp\left( 2\pi \ri \bsk\cdot \bsx_h^{(p)}\right)\right| \leq \frac{1}{m}\left( 4\width(\supp(\bsk))+\frac{8}{c_{\PP}}\min_{j\in \supp(\bsk)}\log |k_j|\right),\]
    and
    \[ \left|\frac{1}{|P_{d,m}^2|}\sum_{p\in \PP_m}\sum_{h,\ell=0}^{p-1}\exp\left( 2\pi \ri \bsk\cdot \bsy_{h,\ell}^{(p)}\right)\right| \leq \frac{1}{m}\left( 4\width(\supp(\bsk))+\frac{8}{c_{\PP}}\min_{j\in \supp(\bsk)}\log |k_j|\right).\]
\end{corollary}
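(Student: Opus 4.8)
The plan is to pass to the triangle inequality, estimate each prime's exponential sum by either Lemma~\ref{lem:exponential_sum} or the trivial bound, and then sum over $\PP_m$. Write $N := \min_{j\in\supp(\bsk)}|k_j|$ and $w := \width(\supp(\bsk))$. Since $|P_{d,m}^1| = \sum_{p\in\PP_m}p^2$, the triangle inequality reduces the claim to bounding $\sum_{p\in\PP_m}\bigl|\sum_{h=0}^{p^2-1}\exp(2\pi\ri\bsk\cdot\bsx_h^{(p)})\bigr|$ from above, and I would split $\PP_m$ according to whether $p\nmid\bsk$ or $p\mid\bsk$.

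For the primes with $p\nmid\bsk$, the hypothesis $\min_{p\in\PP_m}p\geq d$ guarantees $p\geq d$, so Lemma~\ref{lem:exponential_sum} applies and gives $\bigl|\sum_h\exp\bigr|\leq p^2\cdot(w/p)=pw$. Summing and using $p\leq m$ yields $\sum_{p\nmid\bsk}pw\leq wm|\PP_m|$. Dividing by $|P_{d,m}^1|>(m/2)^2|\PP_m|=\tfrac{m^2}{4}|\PP_m|$ (each $p>m/2$) makes $|\PP_m|$ cancel and produces the term $4w/m$; note that no prime-counting input is needed here.

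For the primes with $p\mid\bsk$ only the trivial bound $\bigl|\sum_h\exp\bigr|\leq p^2\leq m^2$ is available, so the real work is to control their number $t:=\#\{p\in\PP_m:p\mid\bsk\}$. Here I would use that every such $p$ divides $k_j$ for all $j\in\supp(\bsk)$, hence divides $N$, and that these are distinct primes each exceeding $m/2$; their product divides $N$ yet exceeds $(m/2)^t$, which forces $t\leq \log N/\log(m/2)$. Consequently $\sum_{p\mid\bsk}p^2\leq m^2 t\leq m^2\log N/\log(m/2)$, and dividing by the lower bound $|P_{d,m}^1|>\tfrac{m^2}{4}|\PP_m|\geq \tfrac{c_{\PP}}{4}\tfrac{m^3}{\log m}$ from \eqref{eq:prime_num} gives $\tfrac{4\log m}{c_{\PP}\,m\,\log(m/2)}\log N$. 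The decisive point is that the factor $\log m/\log(m/2)$ is bounded by $2$ once $m\geq 4$, so this term is at most $\tfrac{8}{c_{\PP}m}\log N$; the few small values of $m$ (where $\PP_m$ is a singleton) are checked directly, since then the ratio is either $0$ or $1$ and the target bound holds because $c_{\PP}<1$. Adding the two contributions gives $\tfrac1m\bigl(4w+\tfrac{8}{c_{\PP}}\log N\bigr)$, as claimed, and the bound for $P_{d,m}^2$ follows verbatim from the second estimate of Lemma~\ref{lem:exponential_sum}.

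The main obstacle is exactly this count of ``bad'' primes: the naive estimate $t\leq\log_2 N$ on the number of distinct prime factors of $N$ would leave an extra factor $\log m$ (inherited from $|\PP_m|\sim m/\log m$ in the denominator) and fail to produce the required $1/m$ decay. Exploiting $p>m/2$ to bound the product of these primes, which replaces $\log 2$ by $\log(m/2)$ in the denominator of $t$, is precisely what cancels this logarithm.
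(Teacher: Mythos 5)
Your proposal is correct and follows essentially the same route as the paper: triangle inequality over the primes, Lemma~\ref{lem:exponential_sum} for the primes with $p\nmid\bsk$, the trivial bound plus a count of the primes dividing $\bsk$ via the product-of-large-primes argument (the paper phrases this as ``$k$ has at most $\log_n k$ prime divisors $\geq n$'' and gets $2\log|k_{j^*}|/\log m$ directly, where you compare $\log(m/2)$ to $\log m$ and treat $m<4$ separately), and the lower bound $|P_{d,m}^1|\geq (m/2)^2|\PP_m|$ together with \eqref{eq:prime_num}. The constants come out the same, so no further comparison is needed.
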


\begin{proof}
    The following proof for the first bound is similar to that of \cite[Corollary~2.3]{G23}, and the second bound can be proven in a similar way, so we omit the details. Using Lemma~\ref{lem:exponential_sum}, for $\bsk\in \ZZ^d\setminus \{\bszero\}$, we have
    \begin{align*}
        \left|\frac{1}{|P_{d,m}^1|}\sum_{p\in \PP_m}\sum_{h=0}^{p^2-1}\exp\left( 2\pi \ri \bsk\cdot \bsx_h^{(p)}\right)\right| & \leq \frac{1}{|P_{d,m}^1|}\sum_{p\in \PP_m}\left|\sum_{h=0}^{p^2-1}\exp\left( 2\pi \ri \bsk\cdot \bsx_h^{(p)}\right)\right| \\
        & \leq \frac{1}{|P_{d,m}^1|}\sum_{\substack{p\in \PP_m\\ p\nmid \bsk}}p\width(\supp(\bsk))+\frac{1}{|P_{d,m}^1|}\sum_{\substack{p\in \PP_m\\ p\mid \bsk}}p^2\\
        & \leq \frac{m|\PP_m|}{|P_{d,m}^1|}\width(\supp(\bsk))+\frac{m^2}{|P_{d,m}^1|}\sum_{\substack{p\in \PP_m\\ p\mid \bsk}}1\\
        & \leq \frac{m|\PP_m|}{(m/2)^2|\PP_m|}\width(\supp(\bsk))+\frac{m^2}{(m/2)^2|\PP_m|}\sum_{\substack{p\in \PP_m\\ p\mid \bsk}}1\\
        & \leq \frac{4}{m}\width(\supp(\bsk))+\frac{4\log m}{c_{\PP}m}\sum_{\substack{p\in \PP_m\\ p\mid \bsk}}1,
    \end{align*}
    where the last inequality follows from \eqref{eq:prime_num}. To give a bound on the last sum over $p\in \PP_m$ which divide $\bsk$, we use the fact that, for any integers $k,n\in \NN$, $k$ has at most $\log_n k$ prime divisors larger than or equal to $n$. With $\II(\cdot)$ denoting the indicator function, for any index $j^*\in \supp(\bsk)$, we get
    \begin{align*}
        \sum_{\substack{p\in \PP_m\\ p\mid \bsk}}1 & = \sum_{p\in \PP_m}\prod_{j\in \supp(\bsk)}\II(p\mid k_j)\leq \sum_{p\in \PP_m}\II(p\mid k_{j^*}) \leq \log_{\lceil m/2\rceil +1}|k_{j^*}|\leq \frac{2\log |k_{j^*}|}{\log m}.
    \end{align*}
    Since this inequality applies to any index $j^*\in \supp(\bsk)$, it holds that
    \[ \sum_{\substack{p\in \PP_m\\ p\mid \bsk}}1 \leq \frac{2}{\log m}\min_{j\in \supp(\bsk)}\log |k_j|.\]
    This completes the proof.
\end{proof}

Now we are ready to prove Theorem~\ref{thm:main2}.
\begin{proof}[Proof of Theorem~\ref{thm:main2}]
Since any function $f\in F_{d,r}$ has an absolutely convergent Fourier series, by letting $Q_{d,n}$ being the QMC rule using $P_{d,m}^1$ (or $P_{d,m}^2$) for some $m\geq 2$, 
it follows from Corollary~\ref{cor:exponential_sum} that, with $n$ equal to $\sum_{p\in \PP_m}p^2$,
\begin{align}\label{ineq_det_bound}
\left| I_d(f)-Q_{d,n}(f)\right| & = \left| I_d(f)-\frac{1}{|P_{d,m}^1|}\sum_{p\in \PP_m}\sum_{h=0}^{p^2-1}f(\bsx_h^{(p)})\right| \nonumber \\
        & = \left| \hat{f}(\bszero)-\frac{1}{|P_{d,m}^1|}\sum_{p\in \PP_m}\sum_{h=0}^{p^2-1}\sum_{\bsk\in \ZZ^d}\hat{f}(\bsk)\exp\left( 2\pi \ri \bsk\cdot \bsx_h^{(p)}\right)\right| \nonumber \\
        & = \left| \sum_{\bsk\in \ZZ^d\setminus \{\bszero\}}\hat{f}(\bsk)\frac{1}{|P_{d,m}^1|}\sum_{p\in \PP_m}\sum_{h=0}^{p^2-1}\exp\left( 2\pi \ri \bsk\cdot \bsx_h^{(p)}\right)\right| \nonumber \\
        & \leq \sum_{\bsk\in \ZZ^d\setminus \{\bszero\}}|\hat{f}(\bsk)|\left| \frac{1}{|P_{d,m}^1|}\sum_{p\in \PP_m}\sum_{h=0}^{p^2-1}\exp\left( 2\pi \ri \bsk\cdot \bsx_h^{(p)}\right)\right| \nonumber \\
        & \leq \frac{1}{m}\sum_{\bsk\in \ZZ^d\setminus \{\bszero\}}|\hat{f}(\bsk)|\left( 4\width(\supp(\bsk))+\frac{8}{c_{\PP}}\min_{j\in \supp(\bsk)}\log |k_j|\right) \\
        & \leq \frac{12}{c_{\PP}m}\sum_{\bsk\in \ZZ^d\setminus \{\bszero\}}|\hat{f}(\bsk)|\max\left(\width(\supp(\bsk)),\min_{j\in \supp(\bsk)}\log |k_j|\right) \nonumber \\
        & \leq \frac{12}{c_{\PP}m}\|f\|. \nonumber
    \end{align}
    This leads to an upper bound on the worst-case error of
    \[ e^{\wor}(F_{d,r},Q_{d,n})\leq \frac{12}{c_{\PP}m}.\]
    Therefore, in order to make $e^{\wor}(F_{d,r},Q_{d,n})$ less than or equal to $\varepsilon \in (0,1)$, it suffices to choose $m=\lceil 12c_{\PP}^{-1}\varepsilon^{-1}\rceil$ and we have
    \[  n(\varepsilon, d, F_{d,r})\leq \sum_{p\in \PP_{\lceil 12c_{\PP}^{-1}\varepsilon^{-1}\rceil}}p^2 \leq C_{\PP}\frac{\lceil 12c_{\PP}^{-1}\varepsilon^{-1}\rceil}{\log \lceil 12c_{\PP}^{-1}\varepsilon^{-1}\rceil}\times \left(\lceil 12c_{\PP}^{-1}\varepsilon^{-1}\rceil\right)^2, \]
    from which the result follows.
\end{proof}

\subsection{Numerical integration in the space $F_{d,r_3}$}

This subsection is devoted to proving the following result where the Fourier weights $r$ are of product form.
\begin{theorem}\label{thm:main4}
Consider the space \eqref{eq:subspace_wiener} with $r(\bsk) \geq r_3(\bsk) = \prod_{j=1}^d \max(1, \log |k_j|)$. There exists a constant $C>0$ such that, for any $d\in \NN$ and $\varepsilon \in (0,1)$, we have
    \[ n^{\wor}(\varepsilon, d, F_{d, r})\leq C d^3 \varepsilon^{-3}/(\log (d\varepsilon^{-1} )). \]
\end{theorem}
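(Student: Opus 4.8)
The plan is to reuse verbatim the QMC rule and the error decomposition from the proof of Theorem~\ref{thm:main2}, changing only the final weight-dependent estimate. Since $r_3(\bsk)\geq 1$ for every $\bsk$, any $f\in F_{d,r_3}$ has an absolutely convergent Fourier series, so taking $Q_{d,n}$ to be the QMC rule associated with $P_{d,m}^1$ (or $P_{d,m}^2$) for an $m\geq 2$ with $\min_{p\in\PP_m}p\geq d$, the same chain of equalities and triangle inequality as in \eqref{ineq_det_bound}, together with Corollary~\ref{cor:exponential_sum}, gives
\[
\left|I_d(f)-Q_{d,n}(f)\right|\leq \frac{1}{m}\sum_{\bsk\in\ZZ^d\setminus\{\bszero\}}|\hat f(\bsk)|\left(4\width(\supp(\bsk))+\frac{8}{c_{\PP}}\min_{j\in\supp(\bsk)}\log|k_j|\right).
\]
Writing $a(\bsk)$ for the bracketed factor, the supremum $S_d:=\sup_{\bsk\neq\bszero}a(\bsk)/r_3(\bsk)$ controls the error, because $\sum_{\bsk}|\hat f(\bsk)|\,a(\bsk)\leq S_d\sum_{\bsk}|\hat f(\bsk)|\,r_3(\bsk)\leq S_d\|f\|$, where $r_3(\bsk)=\prod_{j\in\supp(\bsk)}\max(1,\log|k_j|)$.

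The key estimate is that $S_d=O(d)$, which I would prove termwise. Write $a_j:=\max(1,\log|k_j|)\geq 1$ for $j\in\supp(\bsk)$, so the denominator $\prod_{j\in\supp(\bsk)}a_j$ is a product of factors $\geq 1$ and hence is $\geq 1$ and dominates any single factor. For the first numerator term I would bound $\width(\supp(\bsk))\leq d$ and use the denominator $\geq 1$, contributing $\leq 4d$. For the second term, since $0\leq\log|k_j|\leq a_j$ for $j\in\supp(\bsk)$, one has $\min_{j}\log|k_j|\leq\min_j a_j\leq\prod_j a_j$, so this term contributes $\leq 8/c_{\PP}$. Hence $S_d\leq 4d+8/c_{\PP}\leq C'd$ for a constant $C'$ depending only on $c_{\PP}$, and therefore $e^{\wor}(F_{d,r_3},Q_{d,n})\leq C'd/m$.

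It then remains to balance the parameter and count the points. Choosing $m=\lceil C'd\varepsilon^{-1}\rceil$ makes the worst-case error at most $\varepsilon$; taking $C'$ large enough (at least $2$) also forces $\min_{p\in\PP_m}p>m/2\geq d$, so that Corollary~\ref{cor:exponential_sum} is applicable. Finally $n=\sum_{p\in\PP_m}p^2\leq|\PP_m|\,m^2\leq C_{\PP}m^3/\log m$ by \eqref{eq:prime_num}, and substituting $m=\Theta(d\varepsilon^{-1})$ yields $n\leq Cd^3\varepsilon^{-3}/\log(d\varepsilon^{-1})$, as claimed.

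I do not anticipate a serious obstacle: the argument is structurally identical to that of Theorem~\ref{thm:main2}, and the only genuinely new ingredient is the estimate $S_d=O(d)$. The point worth flagging is exactly why this yields polynomial rather than strong polynomial tractability: the factor $\width(\supp(\bsk))$ can be as large as $d$ while the product weight $\prod_{j\in\supp(\bsk)}\max(1,\log|k_j|)$ remains equal to $1$ (for instance when $\bsk$ is supported on widely separated coordinates with all $|k_j|\in\{1,2\}$), so the width is not absorbed by the norm and the extra factor of $d$ is unavoidable for this rule and this space.
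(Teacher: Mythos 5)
Your proposal is correct and follows essentially the same route as the paper's proof: reuse the bound \eqref{ineq_det_bound} from Theorem~\ref{thm:main2}, bound $\width(\supp(\bsk))/r_3(\bsk)\le d$ and $\min_{j\in\supp(\bsk)}\log|k_j|/r_3(\bsk)\le 1$, and then choose $m=\Theta(d\varepsilon^{-1})$. Your explicit check that $m\geq 2d$ guarantees $\min_{p\in\PP_m}p\geq d$, and your closing remark on why the width factor prevents strong tractability, are welcome additions but do not change the argument.
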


\begin{proof}
Using the same setup as in the proof of Theorem~\ref{thm:main2} we have for any $f \in F_{d,r}$ from \eqref{ineq_det_bound} that
\begin{align*}
|I_d(f) - Q_{n,d}(f)| \le & \frac{1}{m} \sum_{\bsk \in \ZZ^d \setminus \{\bszero\}} |\hat{f}(\bsk)| r_3(\bsk)  \max_{\bsell \in \ZZ^d \setminus \{\bszero\}} \frac{ 4\width(\supp(\bsell))+\frac{8}{c_{\PP}}\min_{j\in \supp(\bsell)}\log |\ell_j|}{r_3(\bsell)} \\ \le & \frac{12}{c_{\PP} m} \|f\|  \max_{\bsell \in \ZZ^d \setminus \{\bszero\}} \frac{\max( \width(\supp(\bsell)), \min_{j\in \supp(\bsell)}\log |\ell_j|)}{r_3(\bsell)}.
\end{align*}

We have
\begin{equation*}
\max_{\bsell \in \ZZ^d \setminus \{\bszero\}} \frac{\width(\supp(\bsell))}{r_3(\bsell)} \le d
\end{equation*}
and
\begin{equation*}
\max_{\bsell \in \ZZ^d \setminus \{\bszero\}} \frac{\min_{j\in \supp(\bsell)}\log |\ell_j|}{r_3(\bsell)} \le 1.
\end{equation*}
This leads to an upper bound on the worst-case error as
\[ e^{\wor}(F_{d,r},Q_{d,n})\leq \frac{12 d}{c_{\PP}m}.\]
    Therefore, in order to make $e^{\wor}(F_{d,r_3},Q_{d,n})$ less than or equal to $\varepsilon \in (0,1)$, it suffices to choose $m=\lceil 12d c_{\PP}^{-1}\varepsilon^{-1}\rceil$ and we have
    \[  n(\varepsilon, d, F_{d,r})\leq \sum_{p\in \PP_{\lceil 12d c_{\PP}^{-1}\varepsilon^{-1}\rceil}}p^2 \leq C_{\PP}\frac{\lceil 12 d c_{\PP}^{-1}\varepsilon^{-1}\rceil}{\log \lceil 12 d c_{\PP}^{-1}\varepsilon^{-1}\rceil}\times \left(\lceil 12 d c_{\PP}^{-1}\varepsilon^{-1}\rceil\right)^2, \]
    from which the result follows.
\end{proof}


\subsection{Existence result}

The previous two results were based on explicit constructions of point sets. As comparison, in this subsection, we employ a probabilistic argument using Hoeffding's theorem to obtain the existence of point sets such that $n^{\wor}(\varepsilon, d, F_{d, r_4})$ is bounded above.
This proof technique has been used already in similar contexts, like, e.g., in \cite{HNWW01} to prove polynomial tractability of the star discrepancy.
Here we choose the Fourier weights $r_4$ of product form, and we use the slowest possible decay for which the argument still yields polynomial tractability.

Hoeffding's inequality asserts the following: Let $z_0, z_1, \ldots, z_{n-1}$ be independent random variables such that $a_i \le z_i \le b_i$ almost surely. Consider the sum of these random variables, $S_n = z_0 + z_1 + \cdots + z_{n-1}$. Then Hoeffding's theorem states that, for all $t > 0$ we have
\begin{equation*}
\mathbb{P}\left(|S_n - \mathbb{E}(S_n)| \ge t \right) \ge 2\, \mathrm{e}^{-\frac{2 t^2}{\sum_{h=0}^{n-1} (b_h - a_h)^2}},
\end{equation*}
where $\mathbb{E}(S_n)$ is the expected value of $S_n$.

\begin{theorem}\label{thm:main5}
Consider the space \eqref{eq:subspace_wiener} with $r(\bsk) \geq r_4(\boldsymbol{k}) = \prod_{j=1}^d \max(1, |k_j|)$. There exists a constant $C > 0$ such that for any $d \in \mathbb{N}$ and $\varepsilon \in (0,1)$ we have
\begin{equation*}
n^{\wor}(\varepsilon, d, F_{d, r}) \le C \frac{d}{\varepsilon^2}  \log \left(1 + \frac{1}{\varepsilon} \right).
\end{equation*}
\end{theorem}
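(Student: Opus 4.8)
The plan is to run the probabilistic (Monte Carlo) existence argument of \cite{HNWW00}: show that a uniformly random QMC point set achieves worst-case error at most $\varepsilon$ with positive probability, which then guarantees a deterministic point set of the same cardinality. First I would reduce the worst-case error to a supremum of normalized exponential sums. Since the norm of $F_{d,r_4}$ is a weighted $\ell^1$ norm on the Fourier coefficients, its dual is a weighted $\ell^\infty$ norm, so exactly as in the opening steps of \eqref{ineq_det_bound}, any QMC rule $Q_{d,n}$ with nodes $\bsx_0,\dots,\bsx_{n-1}$ satisfies
\[
e^{\wor}(F_{d,r_4},Q_{d,n}) \le \sup_{\bsk\in\ZZ^d\setminus\{\bszero\}} \frac{1}{r_4(\bsk)}\left| \frac1n\sum_{h=0}^{n-1}\re^{2\pi\ri\bsk\cdot\bsx_h}\right|.
\]
Writing $S_n(\bsk):=\frac1n\sum_{h=0}^{n-1}\re^{2\pi\ri\bsk\cdot\bsx_h}$, the task becomes to find nodes with $|S_n(\bsk)|\le\varepsilon\,r_4(\bsk)$ for all $\bsk\neq\bszero$ simultaneously.

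The key observation is a truncation. Because $|S_n(\bsk)|\le1$ trivially, every frequency with $r_4(\bsk)\ge1/\varepsilon$ already satisfies $|S_n(\bsk)|/r_4(\bsk)\le\varepsilon$ regardless of the nodes, so only the \emph{finite} set $\mathcal K:=\{\bsk\neq\bszero:r_4(\bsk)<1/\varepsilon\}$ must be controlled. I would then draw $\bsx_0,\dots,\bsx_{n-1}$ independently and uniformly from $[0,1)^d$; for each fixed $\bsk\neq\bszero$ one has $\EE[S_n(\bsk)]=0$, and the real and imaginary parts of $nS_n(\bsk)$ are sums of independent variables bounded in $[-1,1]$. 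Applying Hoeffding's inequality to each part gives $\PP(|S_n(\bsk)|>\varepsilon\,r_4(\bsk))\le 4\exp(-n\varepsilon^2 r_4(\bsk)^2/4)\le 4\exp(-n\varepsilon^2/4)$, the last step using $r_4(\bsk)\ge1$. A union bound over $\mathcal K$ then yields
\[
\PP\Big(\exists\,\bsk\in\mathcal K:\ |S_n(\bsk)|>\varepsilon\,r_4(\bsk)\Big)\le 4\,|\mathcal K|\,\exp\!\left(-\frac{n\varepsilon^2}{4}\right),
\]
so a point set with worst-case error at most $\varepsilon$ exists as soon as the right-hand side is below $1$.

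Finally I would bound the cardinality of $\mathcal K$. For any $\beta>1$, the product structure of $r_4$ gives
\[
|\mathcal K|\le \sum_{\bsk:\,r_4(\bsk)<1/\varepsilon}\Big(\tfrac{1}{\varepsilon\,r_4(\bsk)}\Big)^{\beta}\le \varepsilon^{-\beta}\sum_{\bsk\in\ZZ^d} r_4(\bsk)^{-\beta}=\varepsilon^{-\beta}\big(1+2\zeta(\beta)\big)^d,
\]
and taking $\beta=2$ yields $|\mathcal K|\le\varepsilon^{-2}(1+\pi^2/3)^d$. Requiring $4|\mathcal K|\exp(-n\varepsilon^2/4)<1$ is then satisfied by $n=\lceil 4\varepsilon^{-2}(\ln 4+2\ln(1/\varepsilon)+d\ln(1+\pi^2/3))\rceil$, and since $\ln 4+2\ln(1/\varepsilon)+d\ln(1+\pi^2/3)\le C' d\ln(1+1/\varepsilon)$ for all $\varepsilon\in(0,1)$ and $d\ge1$ (because $\ln(1+1/\varepsilon)\ge\ln 2$ and $\ln(1+1/\varepsilon)\ge\ln(1/\varepsilon)$), this gives the claimed bound $n^{\wor}(\varepsilon,d,F_{d,r_4})\le C\,\varepsilon^{-2}d\ln(1+1/\varepsilon)$. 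The main obstacle is the frequency count: one must first notice that high-frequency modes are automatically harmless (the truncation step) and then control the cardinality of the surviving low-frequency set, whose $(1+2\zeta(\beta))^d$ growth is precisely what produces the linear dependence on $d$; the $\ln(1/\varepsilon)$ contribution coming from the $\varepsilon^{-\beta}$ factor is then absorbed into the stated $\ln(1+1/\varepsilon)$.
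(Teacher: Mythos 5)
Your proposal is correct and follows the same probabilistic existence argument as the paper: draw $n$ i.i.d.\ uniform nodes, apply Hoeffding's inequality to each exponential $e_{\bsk}$, take a union bound over the finite set of frequencies with $r_4(\bsk)$ below the threshold $1/\varepsilon$, and observe that all remaining frequencies are handled by the trivial bound $|S_n(\bsk)|\le 1$. The one step where you genuinely diverge is the cardinality estimate for the low-frequency set: the paper proves $|\Acal(\delta,r_4,d)|\le \frac{4^d}{\delta}\left(1+\log\frac{1}{\delta}\right)^{d-1}$ by an explicit induction on $d$, whereas you use a Rankin-type trick, $|\mathcal K|\le \varepsilon^{-\beta}\sum_{\bsk\in\ZZ^d}r_4(\bsk)^{-\beta}=\varepsilon^{-\beta}\bigl(1+2\zeta(\beta)\bigr)^d$ with $\beta=2$. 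Your count is weaker in the $\varepsilon$-exponent ($\varepsilon^{-2}$ versus the paper's $\varepsilon^{-1}$ up to logarithms) but shorter and exploits the product form of $r_4$ directly; since only $\log|\mathcal K|$ enters the final estimate, both yield $O\bigl(d\log(1+\varepsilon^{-1})\bigr)$ and hence the identical bound $n\le C d\varepsilon^{-2}\log(1+\varepsilon^{-1})$. A minor point in your favour: you treat the real and imaginary parts of $nS_n(\bsk)$ separately before invoking Hoeffding, which is cleaner than the paper's direct application to the complex-valued $e_{\bsk}$; the resulting constants differ but are immaterial.
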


\begin{proof}
\sloppy Monte Carlo integration approximates the integral $I_d(f)$ using the point set $P_{d,n} = \{\bsx_0, \bsx_1, \ldots, \bsx_{n-1}\} \subset [0,1)^d$ by
\[ Q(f; P_{d,n})=\frac{1}{n}\sum_{h=0}^{n-1}f(\bsx_h),\]
where $\bsx_0,\ldots,\bsx_{n-1}$ are independently and randomly chosen from the uniform distribution over $[0,1)^d$.
For any function $f\in F_{d,r}$, we have 
\[ \EE[Q(f;P_{d,n})] = \frac{1}{n}\sum_{h=0}^{n-1}\EE[f(\bsx_h)] = I_d(f), \]
and also, for any $\bsx\in [0,1)^d$, we have
\[
    |f(\bsx)| = \left| \sum_{\bsk\in \ZZ^d}\hat{f}(\bsk)e^{2\pi \ri \bsk\cdot \bsx}\right|
    \leq \sum_{\bsk\in \ZZ^d}\left| \hat{f}(\bsk) \right|
    \leq \sum_{\bsk\in \ZZ^d}\left| \hat{f}(\bsk)\right| r_4(\bsk)
    \leq \|f\|.
\]

For $\bsk \in \mathbb{Z}^d$ let $e_{\bsk}(\bsx) = \exp(2\pi \ri \bsk \cdot \bsx)$. For a given $\delta > 0$ and $\bsk \in \mathbb{Z}^d \setminus \{\bszero\} $, let
\begin{align*}
\Delta_{\bsk}(\delta) = & \left\{P_{d,n} \subset [0,1)^d\: \mid \: | Q(e_{\bsk}; P_{d,n}) - I_d(e_{\bsk}) | \ge \delta r_4(\bsk) \right\} \\ = &  \left\{P_{d,n} \subset [0,1)^d\: \mid \: | Q(e_{\bsk}; P_{d,n}) | \ge \delta r_4(\bsk) \right\}.
\end{align*}
Now we can use Hoeffding's inequality with $z_h = f(\bsx_h)$, $a_h = - \|f\|$, $b_h = \|f\|$ and $t = n \varepsilon$, for any $\varepsilon>0$, we have
\[ \PP\left[\{P_{d,n} \subset [0,1)^d\: \mid \:  |Q(f; P_{d,n})-I_d(f)|\geq \varepsilon \} \right]  \le  2\exp\left(-\frac{n\varepsilon^2}{2\|f\|^2}\right). \]
Since $\|e_{\bsk}\| = r_4(\bsk)$, Hoeffding's inequality implies that
\begin{equation*}
    \PP[\Delta_{\bsk}(\delta) ] \le 2\exp \left(- \frac{n \delta^2}{2} \right).
\end{equation*}

Let
\begin{equation*}
\Acal = \left\{\bsk \in \mathbb{Z}^d \setminus \{\bszero\}\: \mid \: r_4(\bsk) \le \frac{1}{\delta} \right\}.
\end{equation*}
Then the union bound implies that
\begin{equation*}
\PP\left[ \bigcup_{\bsk \in \Acal} \Delta_{\bsk}(\delta) \right] \le 2|\Acal| \exp\left( - \frac{n \delta^2}{2} \right).
\end{equation*}
For $\bsk\neq \bszero$, let
\begin{align*}
\overline{\Delta}_{\bsk}(\delta) = & \left\{P_{d,n} \subset [0,1)^d\: \mid \: | Q(e_{\bsk}; P_{d,n}) - I_d(e_{\bsk}) | < \delta r_4(\bsk)  \right\} \\ = &  \left\{P_{d,n} \subset [0,1)^d\: \mid \: | Q(e_{\bsk}; P_{d,n}) | < \delta r_4(\bsk)  \right\}.
\end{align*}
Then
\begin{equation*}
\PP \left[ \bigcap_{\bsk \in \Acal} \overline{\Delta}_{\bsk}(\delta)  \right] > 1 - 2|\Acal| \exp\left(-\frac{n \delta^2}{2} \right).
\end{equation*}
Thus, if $1 - 2|\Acal| \exp\left(- \frac{n \delta^2}{2}\right) \ge 0$, then there exists a point set $P_{d,n}^\ast \subset [0,1)^d$ of size $n$ such that
\begin{equation*}
|Q(e_{\bsk}; P_{d,n}^\ast) - I_d(e_{\bsk}) | < \delta r_4(\bsk), \quad \forall \bsk \in \Acal,
\end{equation*}
i.e., this point set $P_{d,n}^\ast$ numerically integrates all $e_{\bsk}$, $\bsk \in \Acal$, simultaneously with the specified error. Now $1 - 2|\Acal| \exp\left( - \frac{n \delta^2}{2} \right) \ge 0$ is equivalent to
\[
|\Acal| \le \frac{1}{2}\exp\left( \frac{n \delta^2}{2} \right) ,
\]
and this in turn is equivalent to
\[
 \frac{n\delta^2}{2}\geq \log(2|\Acal|).
\]

Let now $f \in F_{d,r}$ with $r\geq r_4$. Then
\begin{align*}
\left| Q(f; P_{d,n}^\ast) - I_d(f) \right| = &  \left| \sum_{\bsk \in \mathbb{Z}^d \setminus \{\bszero\} } \widehat{f}(\bsk) \left( Q(e_{\bsk}; P_{d,n}^\ast) - I_d(e_{\bsk}) \right) \right| \\ \le  & \sum_{\bsk \in \mathbb{Z}^d \setminus \{\bszero\}  } \left|\widehat{f}(\bsk) \right| \left| Q(e_{\bsk}; P_{d,n}^\ast) - I_d(e_{\bsk}) \right| \\ \le & \delta \sum_{\bsk \in \Acal} | \widehat{f}(\bsk) | r_4(\bsk) +  \sum_{\bsk \in ( \mathbb{Z}^d \setminus |\Acal| ) \setminus \{\bszero\} } |\widehat{f}(\bsk)| \\ \le &  \delta \sum_{\bsk \in \Acal} | \widehat{f}(\bsk) | r_4(\bsk) +  \max_{\bsk \in (\mathbb{Z}^d \setminus \Acal ) \setminus \{\bszero\} } \frac{1}{r_4(\bsk)} \sum_{\bsk \in (\mathbb{Z}^d \setminus |\Acal| ) \setminus \{\bszero\} } |\widehat{f}(\bsk)| r_4(\bsk) \\ \le & \delta \|f\|.
\end{align*}
Thus
\begin{equation}\label{eq:worst-case}
\sup_{f \in F_{d,r}: \|f\| \le 1} \left| Q(f; P_{d,n}^\ast) - I_d(f) \right| \le \delta.
\end{equation}

We now estimate the size of the set $\Acal$, given by
\begin{align*}
\Acal = \Acal(\delta,r_4,d) = & \left\{ \bsk \in \mathbb{Z}^d \setminus \{\bszero\}\: \mid \: r_4(\bsk) \le \frac{1}{\delta} \right\}.
\end{align*}
Although volume estimates are available in the literature (see, e.g., \cite{CD16,KSU15}), for completeness, we prove, by induction on $d$, that
\[|\Acal(\delta,r_4,d)|\leq \frac{4^d}{\delta}\left( 1+\log \frac{1}{\delta}\right)^{d-1} \] 
holds for any $\delta\in (0,1].$

For the case $d=1$, we have
\[ \left|\Acal(\delta, r_4,1)\right| \leq 1+2\left\lfloor \frac{1}{\delta}\right\rfloor \leq 1+\frac{2}{\delta}\leq \frac{4}{\delta}.\]
For $d>1$, assume that
\[|\Acal(\delta,r_4,d-1)|\leq \frac{4^{d-1}}{\delta}\left(1+\log\frac{1}{\delta}\right)^{d-2} \] 
for any $\delta\in (0,1]$. Then we have
\begin{align*}
    \left|\Acal(\delta, r_4,d)\right| & = \left|\Acal(\delta, r_4,d-1)\right|+\left|\left\{ \bsk \in \mathbb{Z}^d \setminus \{\bszero\}\: \mid \: k_d\neq 0\quad \text{and}\quad r_4(\bsk) \le \frac{1}{\delta} \right\}\right| \\
    & = \left|\Acal(\delta, r_4,d-1)\right|+2\sum_{k_d=1}^{\lfloor 1/\delta\rfloor}\left|\left\{ \bsk \in \mathbb{Z}^{d-1} \: \mid \: r_4(\bsk) \le \frac{1}{k_d\delta} \right\}\right| \\
    & = \left|\Acal(\delta, r_4,d-1)\right|+2\sum_{k_d=1}^{\lfloor 1/\delta\rfloor}\left(1+\left|\Acal(k_d\delta, r_4,d-1)\right|\right)\\
    & \leq \frac{4^{d-1}}{\delta}\left(1+\log\frac{1}{\delta}\right)^{d-2}+2\left\lfloor \frac{1}{\delta}\right\rfloor+2\sum_{k_d=1}^{\lfloor 1/\delta\rfloor}\frac{4^{d-1}}{k_d\delta}\left(1+\log\frac{1}{k_d\delta}\right)^{d-2} \\
    & \leq \frac{4^{d-1}}{\delta}\left(1+\log\frac{1}{\delta}\right)^{d-2}+\frac{2}{\delta}+\frac{2\cdot 4^{d-1}}{\delta}\left(1+\log\frac{1}{\delta}\right)^{d-2}\sum_{k_d=1}^{\lfloor 1/\delta\rfloor}\frac{1}{k_d}\\
    & \leq \frac{4^{d-1}}{\delta}\left(1+\log\frac{1}{\delta}\right)^{d-2}+\frac{2}{\delta}+\frac{2\cdot 4^{d-1}}{\delta}\left(1+\log\frac{1}{\delta}\right)^{d-1}\\
    & \leq \frac{4^{d-1}+2+2\cdot 4^{d-1}}{\delta}\left(1+\log\frac{1}{\delta}\right)^{d-1}\\ 
    & \leq \frac{4^d}{\delta}\left(1+\log\frac{1}{\delta}\right)^{d-1},
\end{align*}
where, in the first inequality, we used the induction assumption, and in the third inequality, we used
\[ \sum_{k_d=1}^{\lfloor 1/\delta\rfloor}\frac{1}{k_d} \leq 1+\int_1^{\lfloor 1/\delta\rfloor}\frac{1}{x}\rd x=1+\log \left\lfloor \frac{1}{\delta}\right\rfloor\leq 1+\log\frac{1}{\delta}.\]

Now if we choose $n$ such that
\[ \frac{ n\delta^2}{2} \ge \log \left(2 \cdot \frac{4^d}{\delta}\left( 1+\log \frac{1}{\delta}\right)^{d-1} \right) =\log 8 +\log\frac{1}{\delta} + (d-1) \log \left(4+4\log\frac{1}{\delta}\right) \]
holds, then there exists a point set $P_{d,n}^*$ which satisfies \eqref{eq:worst-case}. This means that the information complexity is bounded by
\[ n^{\wor}(\varepsilon, d, F_{d,r})\leq C d\varepsilon^{-2}\log(1+\varepsilon^{-1}), \]
for some constant $C>0$.
\end{proof}

\begin{remark}
    In \cite{JUV23,KPUU23}, multivariate $L_p$ approximation for the space \eqref{eq:subspace_wiener} with
    \[ r(\bsk)=\prod_{j=1}^{d}(1+|k_j|)^\alpha, \quad \text{or}\quad r(\bsk)=\prod_{j=1}^{d}\max(1,|k_j|^\alpha), \]
    for $\alpha>0$, has been studied as a special case of more general frameworks. Since the integration problem is no harder than the approximation problem,
    it follows from \cite[Corollary~4.4]{JUV23} that a convergence rate for the worst-case error of order $n^{-\alpha-1/2}(\log n)^{c(d)}$ with an exponent $c(d)\geq (d-1)\alpha+1/2$ is achieved.
    For $\alpha=1$, this order (in terms of $n$) is superior to what is possible by Monte Carlo integration. However, since our focus is on tractability, whereas theirs is not, further investigation is needed for a direct comparison. This is similar to the star discrepancy problem, where the best-known rate is $(\log n)^{d-1}/n$, yet the bound $\sqrt{d/n}$ is more favorable in terms of the dependence on the dimension $d$.
\end{remark}

\begin{remark}
    Consider the space \eqref{eq:subspace_wiener} with $r(\bsk) = \max(1, |k_1|,\ldots,|k_d|)$, which is larger than the space \eqref{eq:subspace_wiener} with $r(\bsk) = r_4(\bsk)$ if $d\geq 2$. Since 
    \[ \left| \left\{ \bsk \in \mathbb{Z}^d \setminus \{\bszero\}\: \mid \: \max(1, |k_1|,\ldots,|k_d|) \le \frac{1}{\delta} \right\} \right| = \left( 1+2\left\lfloor \frac{1}{\delta}\right\rfloor\right)^d, \]
    applying the same argument as in the proof of Theorem~\ref{thm:main5} leads to the following bound on the information complexity:
\[ n^{\wor}(\varepsilon, d, F_{d,r})\leq C d\varepsilon^{-2}\log(1+\varepsilon^{-1}), \]
for some constant $C>0$.
\end{remark}


\section{Strong tractability in the randomized setting}\label{sec:str_tractable_ran}

Finally, we show the following upper bound in the randomized setting.
\begin{theorem}\label{thm:main3}
Consider the space \eqref{eq:subspace_wiener} with $r(\bsk) \geq r_1(\boldsymbol{k}) = \max(1, \log \min_{j \in \supp(\bsk)} |k_j|)$. There exists a constant $C>0$ such that, for any $d\in \NN$ and $\varepsilon \in (0,1)$, we have
    \[ n^{\ran}(\varepsilon, d, F_{d,r})\leq C \varepsilon^{-1}. \]
\end{theorem}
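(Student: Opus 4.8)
The plan is to exploit the randomized setting by replacing the \emph{structured} Korobov generating vector, whose intrinsic dependence on $\width(\supp(\bsk))$ is exactly what produces the factor of $d$ in the deterministic bound, with a \emph{fully random} generating vector. Concretely, for an integer $m\ge 2$ (to be chosen as $m\asymp \varepsilon^{-1}$) I would define the randomized algorithm $A_d$ as follows: draw a prime $p$ uniformly at random from $\PP_m$, draw $\boldsymbol{g}=(g_1,\dots,g_d)$ uniformly at random from $\{0,1,\dots,p-1\}^d$, and return the rank-$1$ lattice rule
\[ Q_{d,p,\boldsymbol{g}}(f)=\frac{1}{p}\sum_{h=0}^{p-1} f\left(\left\{\frac{h g_1}{p}\right\},\dots,\left\{\frac{h g_d}{p}\right\}\right), \]
which uses $p\le m$ points, so that $\#(A_d)\le m$.

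The key step is to bound the expected modulus of the exponential sum $S_{p,\boldsymbol{g}}(\bsk):=\frac1p\sum_{h=0}^{p-1}\exp(2\pi\ri h(\bsk\cdot\boldsymbol{g})/p)$. By the character property this equals $\II[\bsk\cdot\boldsymbol{g}\equiv 0\ (\mathrm{mod}\ p)]$, and for $\bsk$ with $p\nmid\bsk$ there is an index $j^*$ with $p\nmid k_{j^*}$; fixing the remaining coordinates of $\boldsymbol{g}$, the congruence $\bsk\cdot\boldsymbol{g}\equiv 0$ has exactly one solution $g_{j^*}\in\{0,\dots,p-1\}$, whence $\EE_{\boldsymbol{g}}[|S_{p,\boldsymbol{g}}(\bsk)|]=1/p$, \emph{independently of $\width(\supp(\bsk))$}. (If $p\mid\bsk$ then trivially $S_{p,\boldsymbol{g}}(\bsk)=1$.) This is precisely where randomization removes the $\width$ factor: a random linear form in $\boldsymbol{g}$ modulo $p$ vanishes with probability $1/p$ no matter how the nonzero frequencies are spread across the coordinates, in sharp contrast to the Korobov vector $(\ell,\ell^2,\dots,\ell^d)$, whose associated polynomial has degree tied to $\width(\supp(\bsk))$.

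Averaging over the prime $p$, I would then write
\[ \EE_{p,\boldsymbol{g}}[|S_{p,\boldsymbol{g}}(\bsk)|]=\frac{1}{|\PP_m|}\Bigl(\sum_{\substack{p\in\PP_m\\ p\nmid\bsk}}\tfrac1p+\sum_{\substack{p\in\PP_m\\ p\mid\bsk}}1\Bigr)\le \frac{2}{m}+\frac{1}{|\PP_m|}\cdot\frac{2}{\log m}\min_{j\in\supp(\bsk)}\log|k_j|, \]
using $p>m/2$ for the first sum and the divisor-counting bound $\sum_{p\in\PP_m,\,p\mid\bsk}1\le \tfrac{2}{\log m}\min_{j\in\supp(\bsk)}\log|k_j|$ already established inside the proof of Corollary~\ref{cor:exponential_sum}. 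With $|\PP_m|\ge c_{\PP} m/\log m$ from \eqref{eq:prime_num}, both terms are $O(m^{-1})$ and combine into $\EE_{p,\boldsymbol{g}}[|S_{p,\boldsymbol{g}}(\bsk)|]\le \tfrac{C'}{m}\,r_1(\bsk)$ for an absolute constant $C'$, since $\log\min_{j}|k_j|\le r_1(\bsk)$ and $r_1(\bsk)\ge 1$.

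Finally, since every $f\in F_{d,r_1}$ has an absolutely convergent Fourier series and $I_d(f)=\hat f(\bszero)$, I would bound
\[ \EE_{p,\boldsymbol{g}}\bigl[|I_d(f)-Q_{d,p,\boldsymbol{g}}(f)|\bigr]\le \sum_{\bsk\in\ZZ^d\setminus\{\bszero\}}|\hat f(\bsk)|\,\EE_{p,\boldsymbol{g}}[|S_{p,\boldsymbol{g}}(\bsk)|]\le \frac{C'}{m}\sum_{\bsk\in\ZZ^d\setminus\{\bszero\}}|\hat f(\bsk)|\,r_1(\bsk)\le \frac{C'}{m}\|f\|, \]
so $e^{\ran}(F_{d,r_1},A_d)\le C'/m$, independently of $d$. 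Choosing $m=\lceil C'\varepsilon^{-1}\rceil$ makes the randomized error at most $\varepsilon$ while $\#(A_d)\le m\le C\varepsilon^{-1}$, which is the claim. I expect the main obstacle to be controlling the frequencies with $p\mid\bsk$: for a single fixed prime this term is uncontrollable for adversarial $f$ (e.g.\ $f=e_{(p,0,\dots,0)}$ forces error $1$ with norm only $\log p$), and it is exactly the randomization over $p\in\PP_m$, combined with the divisor-counting estimate, that tames it — the crux is to marry this with the width-removal afforded by the random generating vector so that both contributions simultaneously scale like $r_1(\bsk)/m$.
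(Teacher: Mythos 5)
Your proposal is correct and follows essentially the same route as the paper's proof: a randomly chosen prime $p\in\PP_m$ combined with a uniformly random generating vector for a rank-$1$ lattice rule, the character property reducing the error to the probability of $\bsk\cdot\boldsymbol{g}\equiv 0\pmod p$, and the divisor-counting bound from Corollary~\ref{cor:exponential_sum} to handle the primes dividing $\bsk$. The only (immaterial) difference is that you sample $\boldsymbol{g}$ from $\{0,\dots,p-1\}^d$ rather than $\{1,\dots,p-1\}^d$, giving the probability $1/p$ in place of the paper's $1/(p-1)$.
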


Note that the $\varepsilon$-exponent we obtain is 1. As mentioned earlier, this is better than the $\varepsilon$-exponent of 2 expected from the standard Monte Carlo method.

Our proof is again constructive in the sense that we provide an explicit randomized QMC rule that attains the desired randomized error bound. The randomized QMC rule considered here is similar to those studied in \cite{DGS22,KKNU19}. For an integer $m\geq 2$, we randomly pick $p\in \PP_m$ with uniform distribution and then randomly pick $\bsz=(z_1,\ldots,z_d)\in \{1,\ldots,p-1\}^d$ with uniform distribution. Once $p$ and $\bsz$ are fixed, our algorithm becomes deterministic and is given by the QMC rule using the point set
\[ P_{p,\bsz}=\left\{ \bsx_{h}^{(p,\bsz)}=\left( \left\{ \frac{hz_1}{p}\right\},\ldots,\left\{ \frac{hz_d}{p}\right\}\right)\; \mid\; 0\leq h<p\right\},\]
that is called a rank-1 lattice point set. The following character property of rank-1 lattice point sets is well-known: for any $\bsk\in \ZZ^d$, we have
\[ \frac{1}{p}\sum_{h=0}^{p-1}\exp\left(2\pi \ri \bsk\cdot \bsx_h^{(p,\bsz)}\right)=\begin{cases} 1 & \text{if $\bsk\cdot \bsz\equiv 0 \pmod p$,} \\ 0 & \text{otherwise.}\end{cases}\]
We refer to \cite[Lemmas~4.2 \& 4.3]{DHP15} for the proof.

Now we are ready to prove Theorem~\ref{thm:main3}.
\begin{proof}[Proof of Theorem~\ref{thm:main3}]
    Since any function $f\in F_{d,r}$ has an absolutely convergent Fourier series, an argument similar to the proof of Theorem~\ref{thm:main2} leads to
    \begin{align*}
        \EE_{\omega}\left[ \left| I_d(f)-Q_{d,\omega}(f)\right| \right] & = \frac{1}{|\PP_m|}\sum_{p\in \PP_m}\frac{1}{(p-1)^d}\sum_{\bsz\in \{1,\ldots,p-1\}^d} \left| I_d(f)-\frac{1}{p}\sum_{h=0}^{p-1}f(\bsx_h^{(p,\bsz)})\right|\\
        & = \frac{1}{|\PP_m|}\sum_{p\in \PP_m}\frac{1}{(p-1)^d}\sum_{\bsz\in \{1,\ldots,p-1\}^d} \left| \sum_{\bsk\in \ZZ^d\setminus \{\bszero\}}\hat{f}(\bsk)\frac{1}{p}\sum_{h=0}^{p-1}\exp\left(2\pi \ri \bsk\cdot \bsx_h^{(p,\bsz)}\right)\right|\\
        & \leq \frac{1}{|\PP_m|}\sum_{p\in \PP_m}\frac{1}{(p-1)^d}\sum_{\bsz\in \{1,\ldots,p-1\}^d} \sum_{\bsk\in \ZZ^d\setminus \{\bszero\}}|\hat{f}(\bsk)| \left|\frac{1}{p}\sum_{h=0}^{p-1}\exp\left(2\pi \ri \bsk\cdot \bsx_h^{(p,\bsz)}\right)\right|\\
        & = \sum_{\bsk\in \ZZ^d\setminus \{\bszero\}}|\hat{f}(\bsk)| \frac{1}{|\PP_m|}\sum_{p\in \PP_m}\frac{1}{(p-1)^d}\sum_{\substack{\bsz\in \{1,\ldots,p-1\}^d\\ \bsk\cdot \bsz\equiv 0 \pmod p}} 1,
    \end{align*}
    where the last equality follows from the character property of rank-1 lattice point sets.

    Let us consider the condition $\bsk\cdot \bsz\equiv 0 \pmod p$. If $p\mid \bsk$, i.e., if all the components of $\bsk$ are multiples of $p$, then this condition holds for any $\bsz\in \{1,\ldots,p-1\}^d$. On the other hand, if there exists at least one component $k_j$ that is not a multiple of $p$, as the condition is equivalent to
    \[ k_jz_j\equiv -\sum_{\substack{i=1\\ i\neq j}}^{d}k_iz_i \pmod p, \]
    there exists at most one $z_j\in \{1,\ldots,p-1\}$ for any vector $(z_i)_{i\neq j}$. Thus, it follows that
    \begin{align*}
        \frac{1}{|\PP_m|}\sum_{p\in \PP_m}\frac{1}{(p-1)^d}\sum_{\substack{\bsz\in \{1,\ldots,p-1\}^d\\ \bsk\cdot \bsz\equiv 0 \pmod p}} 1 & \leq \frac{1}{|\PP_m|}\sum_{\substack{p\in \PP_m\\ p\mid \bsk}}1+\frac{1}{|\PP_m|}\sum_{\substack{p\in \PP_m\\ p\nmid \bsk}}\frac{1}{p-1}\\
        & \leq \frac{2}{|\PP_m|\log m}\min_{j\in \supp(\bsk)}\log |k_j|+\frac{2}{m}\\
        & \leq \frac{2}{c_{\PP}m}\min_{j\in \supp(\bsk)}\log |k_j|+\frac{2}{m}\\
        & \leq \frac{4}{c_{\PP}m}\max\left(1,\min_{j\in \supp(\bsk)}\log |k_j|\right)=\frac{4}{c_{\PP}m} r_1(\bsk),
    \end{align*}
    where we have used the result on the sum $\sum_{\substack{p\in \PP_m\\ p\mid \bsk}}1$ shown in the proof of Corollary~\ref{cor:exponential_sum}. 
    
    Using this estimate, we obtain
    \[ \EE_{\omega}\left[ \left| I_d(f)-Q_{d,\omega}(f)\right| \right] \leq \frac{4}{c_{\PP}m}\sum_{\bsk\in \ZZ^d\setminus \{\bszero\}}|\hat{f}(\bsk)| r_1(\bsk) \leq \frac{4}{c_{\PP}m}\|f\|. \]
    Since the cardinality of our randomized algorithm is obviously bounded above by $m$, the information complexity in the randomized setting is bounded above by
    \[ n^{\ran}(\varepsilon, d, F_{d,r}) \leq \left\lceil \frac{4}{c_{\PP}}\varepsilon^{-1}\right\rceil,\]
    for any $\varepsilon\in (0,1)$ and $d\in \NN$, which completes the proof.
\end{proof}

\bibliographystyle{plain}
\bibliography{ref.bib}

\end{document}